\newcommand{\immagine}[3]{%
\begin{figure}[ht]%
%\graphicspath{{immagini/}}
\centering%
\includegraphics{#1}%
\caption{#2}%
\label{#3}%
\end{figure}%
}
\newtheorem{Pro}{Proposition}
\newtheorem{Teo}{Theorem}
\newtheorem{Def}{Definition}
\newtheorem{Lem}{Lemma}
\newtheorem{Rem}{Remark}
\newtheorem{Not}{Notation}
\def\ZZ{{\mathbb{Z}}}
\def\RR{{\mathbb{R}}}
\def\CC{{\mathbb{C}}}
\newcommand{\apici}[1]{\lq\lq{#1}\rq\rq}
\begin{document}

\vspace{2cm}
\title{$0$-Gaps on $3$D Digital Curves}

\author{Angelo MAIMONE \and Giorgio NORDO\thanks{%
This research was supported by italian P.R.I.N., P.R.A. and I.N.D.A.M. (G.N.S.A.G.A.)}%
}

\date{}

\maketitle

\begin{abstract}
in Digital Geometry, gaps are some basic portion of a digital object that a discrete ray can cross
without intersecting any voxel of the object itself.
Such a notion is quite important in combinatorial image analysis and it is strictly connected with some applications in fields as CAD and Computer graphics.
In this paper we prove that the number of $0$-gaps of a $3$D digital curve can be expressed as a linear combination of the number of its $i$-cells (with $i = 0, \ldots, 3$).
\end{abstract}

\section{Introduction}
With the word  {\lq\lq{gap}\rq\rq} in Digital Geometry we mean some basic portion of a digital object that
a discrete ray can cross without intersecting any voxel of the object itself.
Since such a notion is strictly connected with some applications in the field of Computer graphics
(e.g. the rendering of a 3D image by the ray-tracing technique), many papers (see for example \cite{gapscongresso}, \cite{gaps},  \cite{gapsdue}, and \cite{genus}) concerned the study of $0$- and $1$-gaps of $3$-dimensional objects.

More recently, in \cite{mn2011} and \cite{mn2013} two formulas which express, respectively the number of
$1$-gaps of a generic 3D object of dimension $\alpha=1,2$ and
the number of $(n - 2)$-gaps of a generic digital $n$-object,
by means of a few simple intrinsic parameters of the object itself were found.
Furtermore, in \cite{mn2015} the relationship existing between the dimension of a $2$D digital object
equipped with an adjacency relation $A_\alpha$ ($\alpha \in \{0,1\}$) and the number of its gaps was investigated.

In the next section we recall and formalize some basic definitions and properties of the general $n$-dimensional digital spaces with particular regard to the notions
of block, tandem and gap.

In Section \ref{sez:main}, we restrict our attention to digital curves in $3$D digital spaces, deriving some particular cases of the propositions above recalled
in order to prove our main result which states that
the number $g_0$ of $0$-gaps of a $3$D digital curve $\gamma$ can be expressed as a linear combination of the number $c-i$ of its $i$-cells (with $i = 0, \ldots, 3$)
and more precisely that $g_0 = \sum_{i = 0}^3 (-1)^{i+1} 2^i c_i$.

\section{Preliminaries}\label{sez:preliminari}

Throughout this paper we use the \textit{grid cell model} for
representing digital objects, and we adopt the terminology from
\cite{klette-rosenfeld} and \cite{kovalevsky}.
\par
Let $x=(x_1,\ldots x_n)$ be a point of $\ZZ^n$, $\theta \in
\{-1,0,1\}^n$ be an $n$-word over the alphabet $\{-1,0,1\}$, and
$i\in\{1,\ldots n\}$. We define $i$-cell related to $x$ and
$\theta$, and we denote it by $ e = (x,\theta)$, the Cartesian
product, in a certain fixed order, of $n-i$ singletons $\left\{ x_j
\pm \frac{1}{2} \right\}$ by $i$ closed sets $\left[x_j -
\frac{1}{2}, x_j + \frac{1}{2}\right]$, i.e. we set
\[
e = (x,\theta) = \prod_{j=1}^n \left[ x_j  + \frac 1 2 \theta_j - \frac 1 2 [\theta_j=0], x_j  + \frac 1 2 \theta_j + \frac 1 2 [\theta_j=0]  \right],
\]
where $[\bullet]$ denotes the Iverson bracket \cite{Knuth}. The
word $\theta$ is called the \textit{direction} of the cell
$(x,\theta)$ related to the point $x$.
\\
Let us note that an $i$-cell can be related to different point $x
\in \ZZ^n$, and, once we have fixed it, can be related to different
direction. So, when we talk generically about $i$-cell, we mean one
of its possible representation.

The dimension of a cell $e = (x, \theta)$, denoted by  $\dim(e)=i$,  is the number of non-trivial interval of its product representation, i.e. the number of null components of its direction $\theta$.
Thus, $\dim(e) = \sum_{j = 1}^ n [\theta_j = 0]$ or, equivalently, $\dim(e)= n - \theta \cdot \theta$.
So, $e$ is an $i$-cell if and only if it  has dimension $i$.

We denote by $\CC_n^{(i)}$ the set of all $i$-cells of $\RR^n$ and
by $\CC_n$ the set of all cells defined in $\RR^n$, i.e. we set
$\CC_n=\bigcup_{j=0}^n \CC_n^{(j)}$. An $n$-cell of $\CC_n$ is also
called an $n$-voxel. So, for convenience, an $n$-voxel is denoted by
$v$, while we use other lower case letter (usually $e$) to denote
cells of lower dimension. A finite collection $D$ of $n$-voxels is a
digital $n$-object. For any $i=0,\ldots, n$, we denote by $C_{i}(D)$
the set of all $i$-cells of the object $D$, that is $D \cap
\CC_n^{(i)}$, and by $c_i(D)$ (or simply by $c_i$ if no confusion
arise) its cardinality $|C_i(D)|$.

We say that two $n$-cells $v_1$, $v_2$ are $i$-adjacent ($ i =
0,1,\ldots, n-1$) if $ v_1\,\ne\, v_2 $ and there exists at least an
$i$-cell $\overline{e}$  such that $ \overline{e} \subseteq v_1 \cap
v_2$, that is if they are distinct and share at least an $i$-cell.
Two $n$-cells $v_1$, $v_2$ are \textit{strictly} $ i $-adjacent, if
they are $i$-adjacent but not $j$-adjacent, for any $j>i$, that is
if $v_1\cap v_2 \in \CC_n^{(i)}$. The set of all $n$-cells that are
$i$-adjacent to a given $n$-voxel $v$ is denoted by $A_i(v)$ and
called the $i$-\textit{adjacent neighborhoods} of $v$. Two cells
$v_1,v_2 \in \CC_n$ are \textit{incident} each other,  and we write $e_1 I e_2$, if $e_1
\subseteq e_2$ or $e_2 \subseteq e_1$.

\begin{Def}\label{Def: relazione di limitatezza}
Let $e_1,e_2 \in \CC_n$. We say that $e_1$ bounds $e_2$ (or that
$e_2$ is bounded by $e_1$), and we write $e_1 < e_2$, if $e_1 I e_2$
and $\dim(e_1) < \dim(e_2)$. The relation $<$ is called bounding
relation.
\end{Def}

%\begin{Def}\label{Def:cella semplice}
%Let $e$ be an $i$-cell of a digital $n$-object $D$ (with $i=0,\ldots
%n-1$). We say that $e$ is simple if $e$ bounds one and only one
%$n$-cell.
%\end{Def}

\begin{Def}
An incidence structure (see \cite{Design_Theory})  is a triple $(V,\mathcal{B}, \mathcal I)$
where $V$ and $\mathcal{B}$ are any two disjoint sets and $\mathcal
I$ is a binary relation between $V$ and $\mathcal B$, that is
$\mathcal I \subseteq V \times \mathcal B$. The elements of $V$ are
called points, those of $\mathcal{B}$ blocks. Instead of $(p, B) \in
\mathcal I$, we simply write $p \mathcal I B$ and say that
\apici{the point $p$ lies on the block $B$} or \apici{$p$ and $B$
are incident}.
\end{Def}

If $p$ is any point of $V$, we denote by $(p)$ the set of all blocks
incident to $p$, i.e. $(p)=\{B\in \mathcal{B} \colon p \mathcal I
B\}$. Similarly, if $B$ is any block of $\mathcal B$,  we denote by
$(B)$ the set of all points incident to $B$, i.e. $(B)=\{p \in V
\colon p \mathcal I B \}$. For a point $p$, the number $r_p = |(p)|$
is called the degree of $p$, and similarly, for a block $B$, $k_B =
|(B)|$ is the degree of $B$.

Let us remind the following fundamental proposition of incidence structures.

\begin{Pro}\label{Pro:relazione fondamentale delle strutture di incidenza}
Let $(V,\mathcal{B}, \mathcal I)$ be an incidence structure. We have
\begin{equation}\label{Eq: Equazione fondamentale delle strutture di incidenza}
    \sum_{p \in V} r_p = \sum_{B \in \mathcal{B}} k_B,
\end{equation}
where $r_p$ and $k_B$ are the degrees of any point $p \in V$ and any
block $B \in \mathcal B$, respectively.
\end{Pro}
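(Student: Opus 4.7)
My plan is to derive identity~(\ref{Eq: Equazione fondamentale delle strutture di incidenza}) by the classical double-counting principle applied to the finite incidence relation $\mathcal{I} \subseteq V \times \mathcal{B}$: I compute the cardinality $|\mathcal{I}|$ in two different ways and observe that the two resulting expressions coincide with the two sides of the formula.

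First I would rewrite each degree as a sum of Iverson brackets, an operator already used in the paper. For any point $p \in V$ and any block $B \in \mathcal{B}$, the very definitions of $r_p$ and $k_B$ give
\[
r_p \;=\; \bigl|\{B \in \mathcal{B} : p \, \mathcal{I} \, B\}\bigr| \;=\; \sum_{B \in \mathcal{B}} [\, p \, \mathcal{I} \, B \,]
\qquad\text{and}\qquad
k_B \;=\; \sum_{p \in V} [\, p \, \mathcal{I} \, B \,].
\]

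Next I would sum the first expression over all $p \in V$ and the second over all $B \in \mathcal{B}$. In both cases I obtain the same double sum of the indicator $[\, p \, \mathcal{I} \, B\,]$ over the finite product $V \times \mathcal{B}$; swapping the order of summation (a purely finitary manipulation) then yields
\[
\sum_{p \in V} r_p \;=\; \sum_{(p,B) \in V \times \mathcal{B}} [\, p \, \mathcal{I} \, B\,] \;=\; |\mathcal{I}| \;=\; \sum_{B \in \mathcal{B}} k_B,
\]
which is exactly the desired equality. The only ingredient I need beyond the definitions is that $V$ and $\mathcal{B}$ are finite, which is implicit in the setup since the degrees $r_p$ and $k_B$ are treated as nonnegative integers. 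There is no genuine obstacle to overcome: the conceptual content is simply that both sides of the target identity count, grouped differently, the pairs in $\mathcal{I}$.
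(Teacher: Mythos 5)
Your double-counting argument is correct: both sides indeed count the incident pairs $(p,B)\in\mathcal I$, grouped once by points and once by blocks, and the rewriting of the degrees as sums of Iverson brackets together with the exchange of the two finite summations is exactly the standard proof of this identity. The paper itself gives no proof — it merely recalls the proposition as a known fact from the design-theory literature — so there is no authorial argument to compare against; your proof is the canonical one, and your remark that finiteness of $V$ and $\mathcal B$ (or at least of $\mathcal I$) is needed for the sums to make sense is the only hypothesis worth making explicit, as you did.
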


\begin{Def}\label{Def:blocchi}
Let $e$ be an $i$-cell (with $0\le i \le n-1$) of $\CC_n$.
Then an $i$-block centered on $e$, denoted with $B_i(e)$, is the union of all the $n$-voxels
bounded by $e$, i.e. $B_i(e)= \bigcup\{ v \in \CC_{n}^{(n)} \colon e
< v \}$.
\end{Def}

\begin{Rem}\label{Not:Numero di nVoxel che compone un blocco}
Let us note that, for any $i$-cell $e$, $B_i(e)$ is the union of
exactly $2^{n-i}$ $n$-voxels and $e \in B_i(e)$.
\end{Rem}

\begin{Def}\label{Def:tandemn}
Let $v_1$, $v_2$ be two $n$-voxels of a digital object $D$, and $e$
be an $i$-cell ($i=0,\ldots, n-1$). We say that $\{v_1,v_2\}$
forms an $i$-tandem of $D$ over $e$ and we will denote it by $t_i(e)$, if $D \cap B_i(e)= \{v_1,v_2\}
$, $v_1$ and $v_2$ are strictly $i$-adjacent and $v_1 \cap v_2 = e$.
\end{Def}

\begin{Def}\label{Def:gapn}
Let $D$ be a digital $n$-object and $e$ be an $i$-cell (with $i =
0,\ldots,n-2 $). We say that $D$ has an $i$-gap over $e$ if there
exists an $ i $-block $B_i(e)$ such that $B_i(e) \setminus D$ is an
$i$-tandem over $e$. The cell $e$ is called $i$-hub of the related
$i$-gap. Moreover, we denote by $g_i(D)$ (or simply by $g_i$ if no
confusion arises) the number of $i$-gap of $D$.
\end{Def}
\immagine{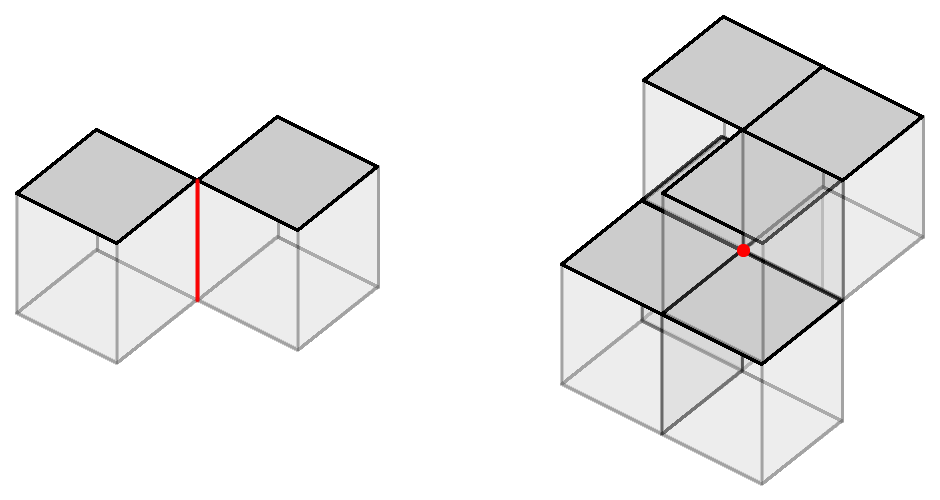}{Configurations of  $1$- and $0$-gaps in $\CC_3$.}{gap}

\begin{Not}
For any $i=0, \ldots, n-1$, we denote by $\mathcal H_{i}(D)$ (or simply by $\mathcal H_{i}$ if no confusion arises)
the sets of all $i$-hubs of $D$.
Clearly, we have $|\mathcal H_{i}| = g_{i}$.
\end{Not}

\begin{Def}\label{Def:FreeCell}
An $i$-cell $e$  (with $i=0, \ldots, n-1$) of a digital $n$-object $D$ is
free iff $B_i(e) \nsubseteq D$.
\end{Def}

\begin{Not}
For any $i=0,\ldots,n-1$, we denote by $C_i^*(D)$ (respectively by
$C_i'(D)$) the set of all free (respectively non-free) $i$-cells of
the object $D$. Moreover, we denote by $c_i^*(D)$ (or simply by
$c_i^*$) the number of free $i$-cells of $D$, and by $c_i'(D)$ (or
simply by $c_i'$) the number of non-free cells.
\end{Not}

\begin{Rem}
It is evident that
$\{C_i^*(D),C_i'(D)  \}$ forms a partition of $C_i(D)$ and that
$c_i= c_i^* + c_i'$.
\end{Rem}

\begin{Pro}\label{Pro:c2}
Let $D$ be a digital $n$-object. Then
$$c_{2} = 6 c_3 - c_{2}'. $$
\end{Pro}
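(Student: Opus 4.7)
The plan is to apply the fundamental counting identity for incidence structures (Proposition~1) to the incidence structure whose points are the $2$-cells of $D$, whose blocks are the $3$-voxels of $D$, and whose incidence relation is the bounding relation $<$ (equivalently, \apici{being a face of}). Since we are in $\mathcal{C}_3$, every $3$-voxel is a cube, so it is bounded by exactly $6$ distinct $2$-cells; hence $k_v = 6$ for every $v \in C_3(D)$ and the right-hand side of equation~(1) evaluates to $6 c_3$.

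The main step is then to compute $r_e$ for each $e \in C_2(D)$ by distinguishing free and non-free $2$-cells. By Remark~1, the block $B_2(e)$ consists of exactly $2^{n-2}=2$ voxels, both of which have $e$ as a face. Every $e \in C_2(D)$ is, by definition, a face of at least one voxel of $D$. If $e$ is non-free then $B_2(e) \subseteq D$, so both voxels of $B_2(e)$ belong to $D$ and therefore $r_e = 2$. If instead $e$ is free then $B_2(e) \not\subseteq D$, so at most one voxel of $B_2(e)$ lies in $D$; combined with the fact that $e$ is incident to some voxel of $D$, this forces $r_e = 1$. Summing over $C_2(D) = C_2^*(D) \cup C_2'(D)$ therefore yields
\[
\sum_{e \in C_2(D)} r_e \;=\; 1 \cdot c_2^* + 2 \cdot c_2' \;=\; (c_2 - c_2') + 2 c_2' \;=\; c_2 + c_2'.
\]

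Plugging both sides into the fundamental identity $\sum_e r_e = \sum_v k_v$ gives $c_2 + c_2' = 6 c_3$, and rearranging produces exactly the desired formula $c_2 = 6 c_3 - c_2'$. I do not expect any serious obstacle here: the only delicate point is the case analysis giving $r_e \in \{1,2\}$, which hinges on interpreting $C_2(D)$ as the $2$-cells that bound at least one voxel of $D$ (the interpretation consistent with the definition of \emph{free} cell and with $B_2(e)$ containing exactly two voxels in $\mathcal{C}_3$).
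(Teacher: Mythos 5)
Your proof is correct and is essentially the paper's own argument: both double-count the incidences between $2$-cells and $3$-voxels of $D$, using that a free $2$-cell bounds exactly one voxel of $D$ and a non-free one exactly two, to get $c_2 + c_2' = 6c_3$. The only cosmetic difference is that you invoke Proposition~\ref{Pro:relazione fondamentale delle strutture di incidenza} explicitly and work at $n=3$, whereas the paper counts the set of incident pairs directly (and for general $n$, yielding $c_{n-1}=2nc_n-c_{n-1}'$).
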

\begin{proof}
Let us consider the set
\[ F = \bigcup_{v \in   C_n(D)} \{ (e,v) \colon e \in C_{n-1}(D), e < v \}.\]
It is evident that:
\begin{align*}
\big|F \big| & =  \Big|\{ (e,v) \colon  e \in C_{n-1} (D), e < v \} \Big|   \cdot  \Big| C_{n}(D)\Big| \\
             & = c_{n-1 \rightarrow n} \cdot c_n \\
             & = 2 n c_n . \\
\end{align*}
Let us set:
$$F^* = F \cap (C_{n-1}^*(D) \times C_{n}(D))$$
and
$$F' = F \cap (C_{n-1}'(D) \times C_{n}(D)) .$$
The map $\phi \colon F^* \to C_{n-1}^*(D)$, defined by $\phi(e,v)=e$, is a bijection.
In fact, besides being evidently surjective, it is also injective, since, if by contradiction
there were two distinct pairs $(e,v_1)$ and $(e,v_2) \in F^*$ associated to $e$,
then $B_{n-1}(e)= \{v_1,v_2\}$ should be an $(n-1)$-block contained in $D$.
This contradicts the fact that the $(n-1)$-cell $e$ is free. Thus $|F^*| = |C_{n-1}^*(D)| = c_{n-1}^*$.
\\
On the other hand, it results:
\begin{align*}
\big|F' \big| & = \Big| \displaystyle \bigcup_{v \in   C_n (D)} \{ (e,v) \colon e \in C'_{n-1}(D), e < v \} \Big| \\
               & = \Big|\displaystyle \bigcup_{e \in   C'_{n-1} (D)} \{ (e,v) \colon v \in C_{n}(D), e < v \} \Big| \\
               & = \Big|\{ (e,v) \colon  v \in C_{n} (D), e < v \} \Big| \cdot  \Big| C'_{n-1}(D)\Big| \\
               & = c_{n-1 \leftarrow n} \cdot c'_{n-1} \\
               & = 2 c'_{n-1}.
\end{align*}
Since $\{ F^*, F' \}$ is
a partition of $F$, we finally have that $| F | = |F^*| + |F'|$,
that is $   2n c_n = c_{n-1}^* + 2 c_{n-1}' = c_{n-1} -  c_{n-1}' +
2  c_{n-1}' = c_{n-1} + c_{n-1}'$, and then the thesis.
\end{proof}

\begin{Not}
Let $i,j$ be two natural number such that $0 \le i < j$. We denote
by $ c_{i \rightarrow j} $ the maximum number of $i$-cells of
$\CC_n$ that bound a $j$-cell. Moreover, we denote by $c_{i
\leftarrow j}$ the maximum number of $j$-cell of $\CC_n$ that are
bounded by an $i$-cell.
\end{Not}

The following three propositions were proved in \cite{mn2013}

\begin{Pro}\label{Pro:c_i_verso_j}
For any $i,j \in \mathbb{N}$ such that $0 \le i <j$, it is
\[
c_{i \rightarrow j} = 2^{j-i} \binom{j}{i}.
\]
\end{Pro}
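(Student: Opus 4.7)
The plan is to unwind the grid-cell definitions and count directly. A $j$-cell of $\CC_n$ has the form $e=(x,\theta)$ where the direction $\theta\in\{-1,0,1\}^n$ has exactly $j$ zero components (the ``interval'' coordinates) and $n-j$ nonzero components (the ``singleton'' coordinates picking one of the two half-integer endpoints). An $i$-cell $f=(y,\eta)$ bounds $e$ iff $f\subseteq e$ and $\dim(f)=i<j$, so I would begin by describing, for a fixed $j$-cell $e$, exactly which $i$-cells are contained in $e$, independently of the ambient $n$.

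The key observation is that on the $n-j$ singleton coordinates of $e$, any subcell $f\subseteq e$ must agree with $e$ (those coordinates are already collapsed to a half-integer), so $\eta$ and $y$ are forced to match $\theta$ and $x$ in those positions. On the $j$ interval coordinates of $e$, the subcell $f$ is free to either keep the full interval or collapse it to one of its two endpoints. Since $\dim(f)=i$, exactly $i$ of those $j$ coordinates remain intervals and the other $j-i$ collapse.

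From here the count is a product of two independent choices: first choose which $i$ of the $j$ interval coordinates of $e$ remain intervals in $f$, giving $\binom{j}{i}$ possibilities; then for each of the remaining $j-i$ coordinates, choose which of the two endpoints $x_k\pm\tfrac12$ to collapse to, giving $2^{j-i}$ possibilities. Multiplying and noting that every such choice yields a distinct $i$-cell $f<e$, and conversely every $i$-cell bounding $e$ arises this way, we obtain exactly $2^{j-i}\binom{j}{i}$ bounding $i$-cells, which is the maximum (indeed, the exact value) claimed.

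No step is really an obstacle; the only mildly delicate point is making sure that what the paper calls $c_{i\to j}$ is well-defined as a constant depending only on $i,j$ (and not on which particular $j$-cell we look at), which is transparent from the product representation since the count above does not depend on $e$. I would therefore state this invariance as a brief remark before doing the enumeration, so that the ``maximum'' in the notation is justified as being attained by every $j$-cell.
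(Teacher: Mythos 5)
Your proof is correct. Note, however, that the paper itself offers no argument for this proposition: it is simply quoted as having been proved in the reference \cite{mn2013}, so there is nothing internal to compare against. Your direct enumeration is a legitimate, self-contained justification: fixing a $j$-cell $e=(x,\theta)$, you correctly observe that any cell contained in $e$ must reproduce the $n-j$ singleton factors of $e$ exactly, while on each of the $j$ interval factors it either keeps the full interval or collapses it to one of the two half-integer endpoints (these being the only points of $\ZZ+\frac12$ lying in a unit interval), so the $i$-cells bounding $e$ are in bijection with the choices of $i$ surviving intervals and of endpoints on the remaining $j-i$ coordinates, giving $2^{j-i}\binom{j}{i}$. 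This is exactly the classical count of $i$-faces of a $j$-dimensional cube, and your closing remark that the count is independent of the particular $j$-cell $e$ is the right observation to make the notation $c_{i\to j}$ (defined in the paper as a maximum) well defined and attained. The only cosmetic addition worth making explicit is that each collapsed configuration is indeed a cell of $\CC_n$ in the sense of the paper's representation $(y,\eta)$, e.g.\ collapsing the $k$-th factor to $x_k+\frac12$ corresponds to taking $\eta_k=1$ with $y_k=x_k$; this is immediate from the definition.
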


\begin{Pro}\label{Pro:c_i_da_j}
For any $i,j \in \mathbb{N}$ such that $0 \le i <j$, it is
\[
c_{i \leftarrow j} = 2^{j-i}\binom{n-i}{j-i}.
\]
\end{Pro}

\begin{Pro}
Let $D$ be a digital $n$-object. Then
\[ c_{n-1} = 2n c_n - c_{n-1}'. \]
\end{Pro}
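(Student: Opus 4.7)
The plan is to transcribe essentially the same double-counting argument used in the proof of Proposition \ref{Pro:c2}, which as written already goes through for general $n$ and in its last line yields exactly the stated formula. I would proceed by forming the incidence set of (facet, voxel) pairs of $D$, counting it in two ways and splitting the count according to whether the facet is free.

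First I would define
\[ F = \{(e,v) : v \in C_n(D),\ e \in C_{n-1}(D),\ e < v\}, \]
and count $|F|$ by fixing the voxel: Proposition \ref{Pro:c_i_verso_j} with $(i,j)=(n-1,n)$ gives $c_{n-1\rightarrow n} = 2n$, so $|F| = 2n\, c_n$.

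Next I would partition $F$ into $F^* = F \cap (C_{n-1}^*(D) \times C_n(D))$ and $F' = F \cap (C_{n-1}'(D) \times C_n(D))$. The projection $\phi(e,v)=e$ restricted to $F^*$ is a bijection onto $C_{n-1}^*(D)$: surjectivity is immediate from the definition of $F$, and injectivity is exactly where freeness enters — if two distinct voxels $v_1,v_2 \in D$ both bounded the same free $(n-1)$-cell $e$, then together they would realise $B_{n-1}(e) \subseteq D$ (since by Remark \ref{Not:Numero di nVoxel che compone un blocco} the block $B_{n-1}(e)$ consists of exactly two voxels), contradicting the freeness of $e$. Hence $|F^*| = c_{n-1}^*$. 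On $F'$, every non-free $(n-1)$-cell $e$ has $B_{n-1}(e) \subseteq D$, and Proposition \ref{Pro:c_i_da_j} with $(i,j)=(n-1,n)$ gives $c_{n-1\leftarrow n} = 2$, so each such $e$ contributes exactly two pairs to $F'$, whence $|F'| = 2 c_{n-1}'$.

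Finally, using that $\{F^*,F'\}$ partitions $F$ together with $c_{n-1} = c_{n-1}^* + c_{n-1}'$, I would combine the two counts to obtain
\[ 2n c_n = c_{n-1}^* + 2 c_{n-1}' = (c_{n-1} - c_{n-1}') + 2 c_{n-1}' = c_{n-1} + c_{n-1}', \]
which rearranges to $c_{n-1} = 2n c_n - c_{n-1}'$. There is essentially no obstacle here: the main content is supplied by Propositions \ref{Pro:c_i_verso_j} and \ref{Pro:c_i_da_j}, which deliver the two adjacency constants $2n$ and $2$, while the only step requiring care is the injectivity of $\phi$, which relies on correctly reading the definition of a free cell.
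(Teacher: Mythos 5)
Your argument is correct and is essentially the paper's own: the proposition is quoted from \cite{mn2013}, but the double counting of facet--voxel incidences you give is exactly the argument the paper writes out (for general $n$) in the proof of Proposition \ref{Pro:c2}, of which the present statement is the generic form. Nothing further is needed.
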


\begin{Not}
Let $e$ be an $i$-cell of a digital $n$-object $D$, and $0 \le i <
j$. We denote by $b_j(e,D)$ (or simply by $b_j(e)$ if no confusion
arises) the number of $j$-cells of $bd(D)$ that are bounded by $e$.
\end{Not}

Let us note that if $e$ is a non-free $i$-cell, then $b_j(e) = 0$.

\begin{Pro}
Let $v$ be an $n$-voxel and $e$ be one of its $i$-cells, $i=0,\ldots, n-1$.
Then, for any $i < j \le n$, it results:
\[b_j(e) = \frac{c_{i \rightarrow j}c_{j \rightarrow n}}{c_{i \rightarrow n}} .\]
\end{Pro}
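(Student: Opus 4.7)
The plan is to deduce the identity from the fundamental counting equation of incidence structures (Proposition~\ref{Pro:relazione fondamentale delle strutture di incidenza}), applied to an incidence structure living inside the single voxel $v$. This mirrors the style of the proof of Proposition~\ref{Pro:c2}, which also consisted in a well-chosen double count.

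First I would fix $v$ and set up the incidence structure $(V,\mathcal{B},\mathcal{I})$ where $V=\{e'\in\CC_n^{(i)}\colon e'<v\}$ is the set of $i$-cells of $v$, $\mathcal{B}=\{f\in\CC_n^{(j)}\colon f<v\}$ is the set of $j$-cells of $v$, and $\mathcal{I}$ is the bounding relation $<$ restricted to $V\times\mathcal{B}$. By Proposition~\ref{Pro:c_i_verso_j} (applied with $n$ in place of $j$) one has $|V|=c_{i\rightarrow n}$ and analogously $|\mathcal{B}|=c_{j\rightarrow n}$.

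Next I would compute the two degrees. For each $f\in\mathcal{B}$, Proposition~\ref{Pro:c_i_verso_j} says $f$ is bounded by exactly $c_{i\rightarrow j}$ $i$-cells of $\CC_n$, and each of them automatically lies in $V$ by transitivity of the bounding relation (if $e'<f<v$ then $e'<v$). Hence $k_f=c_{i\rightarrow j}$ for every block. For a point $e'\in V$, I would appeal to the homogeneity of $v$: the symmetry group of the unit $n$-cube acts transitively on its $i$-faces, so every $i$-cell of $v$ is contained in the same number of $j$-cells of $v$, and this common value is by definition $b_j(e)$. Therefore $r_{e'}=b_j(e)$ for every $e'\in V$. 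Plugging these into equation~(\ref{Eq: Equazione fondamentale delle strutture di incidenza}) yields
\[
c_{i\rightarrow n}\cdot b_j(e)\;=\;c_{j\rightarrow n}\cdot c_{i\rightarrow j},
\]
and solving for $b_j(e)$ gives the thesis.

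The one non-routine step is the homogeneity claim used to fix the point degree. If a symmetry argument feels insufficiently formal, $r_{e'}$ can be computed directly from the product representation $e'=(x,\theta)$ of Section~\ref{sez:preliminari}: a $j$-cell of $v$ containing $e'$ is obtained by ``widening'' some $j-i$ of the $n-i$ singleton coordinate positions of $\theta$ into unit intervals, and the number of such choices is $\binom{n-i}{j-i}$, independent of $e'$. This confirms that $r_{e'}$ is the same for all $e'\in V$ and, as a sanity check, that the right-hand side of the identity simplifies to $\binom{n-i}{j-i}$.
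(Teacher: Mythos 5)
Your argument is correct, but note that the paper itself offers no proof of this proposition: it is stated without proof among the results recalled from \cite{mn2013}, so there is no internal proof to compare against. What you supply is a self-contained argument in exactly the house style of the paper (Propositions \ref{Pro:6} and \ref{Pro:F} and Theorem \ref{Teo:main} all run the same kind of double count through Proposition \ref{Pro:relazione fondamentale delle strutture di incidenza}), and both halves of it are sound: transitivity of $<$ inside the voxel makes every block degree equal to $c_{i\rightarrow j}$, and either the symmetry of the cube or your coordinate count via the product representation pins every point degree at the common value $b_j(e)$, giving $c_{i\rightarrow n}\,b_j(e)=c_{i\rightarrow j}\,c_{j\rightarrow n}$. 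Two small caveats are worth making explicit. First, Proposition \ref{Pro:c_i_verso_j} literally gives the \emph{maximum} number of $i$-cells bounding a $j$-cell, so you should state (as your coordinate computation in effect shows) that every $j$-cell, and likewise the voxel itself, attains this maximum, which is what justifies $k_f=c_{i\rightarrow j}$, $|V|=c_{i\rightarrow n}$ and $|\mathcal B|=c_{j\rightarrow n}$. Second, your incidence-structure setup covers only $i<j<n$: for $j=n$ the block set $\{f\colon f<v\}$ is empty because the bounding relation is strict in dimension, and $c_{n\rightarrow n}$ is not even defined by the paper's notation; this costs nothing in practice, since the paper only uses the formula with $j\le n-1$ (cf.\ Proposition \ref{Pro:3}), but the case should be flagged or handled by the direct count with the convention $b_n(e)=1$. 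Finally, observe that your ``sanity check'' is actually the shortest complete proof: the direct count gives $b_j(e)=\binom{n-i}{j-i}$ independently of $e$, and by Proposition \ref{Pro:c_i_verso_j} one has $\frac{c_{i\rightarrow j}c_{j\rightarrow n}}{c_{i\rightarrow n}}=\frac{\binom{j}{i}\binom{n}{j}}{\binom{n}{i}}=\binom{n-i}{j-i}$, so the incidence-structure machinery, while pleasantly consistent with the rest of the paper, is not strictly needed.
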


%-------------------------------------

\section{Gaps and curves in $3$D digital space}\label{sez:main}

Throughout the rest of the paper we will consider the $3$-dimensional digital space $\ZZ^3$ with the corresponding grid cell model $\CC_3$.

\begin{Def}
A digital object $\gamma$ of $\CC_3$ is said a \textit{digital $k$-curve} if it satisfies the following two condition:
\begin{itemize}
  \item $\forall v \in \gamma$ it is $ 1 \le|A_k(v)|\le 2$;
  \item  For any $v \in \gamma$, if $v_1, v_2 \in A_k(v)$, then $\{v_1, v_2\} \not \in A_k(v) $,
\end{itemize}
that is, for any voxel $v \in \gamma$ there exist at most two voxels $k$-adjacent to $v$
and every pair of voxels $k$-adjacent to a voxel of $\gamma$ can not be $k$-adjacent to each other.
\end{Def}

The voxels in $\gamma$ which have only one $k$-adjacent voxel are said the extreme points of the curve.

\immagine{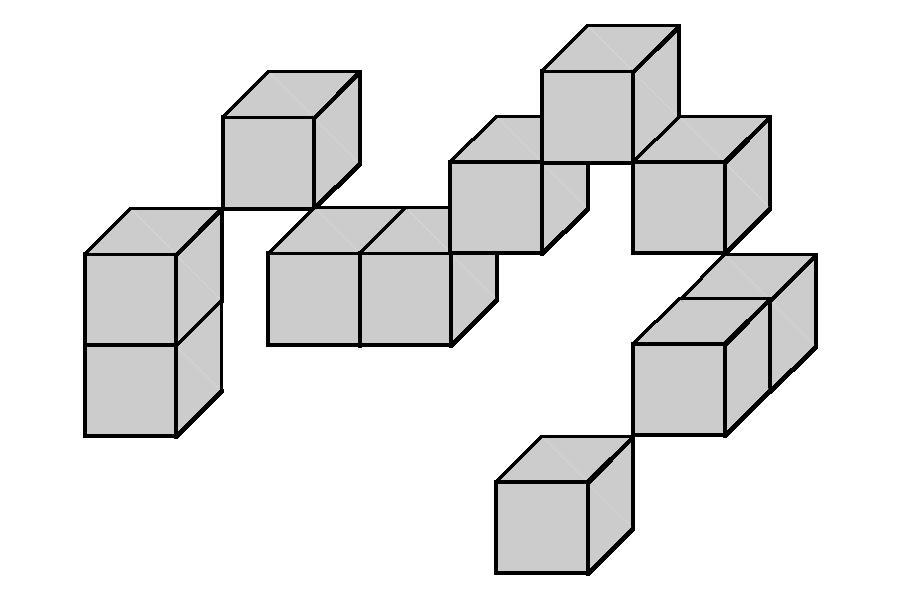}{An example of digital $0$-curve in $\CC_3$}{Excurva}

We are interested only to digital $0$-curve and, if no confusion arises, we will briefly call them digital curve.

The following propositions derive from some general ones proved in \cite{mn2013} for the $n$-dimensional case.

\begin{Pro}\label{Pro:3}
Let $v$ be a voxel and $e$ be one of its $i$-cell, $i=0,\ldots,
2$. Then, for any $i < j \le 2$, we have
\[ b_j(e) = \binom{3-i}{j-i} . \]
\end{Pro}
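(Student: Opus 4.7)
The plan is to derive the formula as a direct specialization of the general identity
\[
b_j(e) = \frac{c_{i \rightarrow j}\, c_{j \rightarrow n}}{c_{i \rightarrow n}}
\]
stated in the last proposition of the preliminaries, specialized to $n=3$, combined with the closed forms for $c_{i \rightarrow j}$ and $c_{j \rightarrow n}$ given in Proposition \ref{Pro:c_i_verso_j}.

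First I would substitute $n=3$ into the general expression for $b_j(e)$, obtaining
\[
b_j(e) = \frac{c_{i \rightarrow j}\, c_{j \rightarrow 3}}{c_{i \rightarrow 3}}.
\]
Then I would plug in $c_{i \rightarrow j} = 2^{j-i}\binom{j}{i}$, $c_{j \rightarrow 3} = 2^{3-j}\binom{3}{j}$, and $c_{i \rightarrow 3} = 2^{3-i}\binom{3}{i}$, observing that the powers of $2$ in the numerator combine to $2^{(j-i)+(3-j)} = 2^{3-i}$, which cancels exactly with the $2^{3-i}$ in the denominator. This reduces the claim to the purely combinatorial identity
\[
\frac{\binom{j}{i}\binom{3}{j}}{\binom{3}{i}} = \binom{3-i}{j-i}.
\]

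The only remaining step is to verify this binomial identity, which is a standard manipulation: expanding both sides into factorials shows both equal $\dfrac{3!}{i!\,(j-i)!\,(3-j)!}$. Equivalently, one can invoke the subset-of-a-subset identity $\binom{3}{j}\binom{j}{i} = \binom{3}{i}\binom{3-i}{j-i}$ and divide through by $\binom{3}{i}$ (which is nonzero since $i \le 2 < 3$).

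I do not anticipate any real obstacle here: the proposition is essentially a bookkeeping computation since all the hard work was done in the general $n$-dimensional propositions cited from \cite{mn2013}. The only thing to be slightly careful about is ensuring that the ranges $0 \le i < j \le 2$ keep all the binomial coefficients and the factor $c_{i \rightarrow 3}$ well-defined and nonzero, which is immediate from $i \le 2$.
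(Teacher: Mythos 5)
Your proposal is correct and follows essentially the same route as the paper, which gives no separate argument for this proposition but simply notes that it derives from the general $n$-dimensional results of \cite{mn2013}; your substitution of $n=3$ into $b_j(e) = c_{i \rightarrow j}c_{j \rightarrow n}/c_{i \rightarrow n}$ together with $c_{i \rightarrow j}=2^{j-i}\binom{j}{i}$ and the subset-of-a-subset identity is exactly that specialization, carried out explicitly.
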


\begin{Pro}\label{Pro:1b}
Let $e$ be a $2$-cell of $\CC_3$. Then the number of $i$-cells ($i = 0,\ldots,2$)
of the $2$-block centered on $e$ is
\[ c_i(B_{2}(e)) =  \frac {9+i}{6} c_{i \rightarrow 3}.\]
\end{Pro}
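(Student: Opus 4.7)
My plan is to compute $c_i(B_2(e))$ via a simple inclusion–exclusion on the two voxels comprising the block. By Remark~1, $B_2(e)$ is the union of exactly $2^{3-2}=2$ voxels $v_1, v_2$, both bounded by $e$; since $e$ is a $2$-cell (a face) and $v_1,v_2$ are distinct unit cubes, their intersection is precisely $v_1\cap v_2 = e$. So every $i$-cell in $B_2(e)$ lies in $v_1$, in $v_2$, or in both, and the $i$-cells lying in both are exactly the $i$-cells contained in (i.e. bounded by, or equal to) $e$.

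The first computational step is to count the $i$-cells of a single voxel. By Proposition~\ref{Pro:c_i_verso_j} with $n=3$, a $3$-voxel has $c_{i\to 3}=2^{3-i}\binom{3}{i}$ $i$-cells bounding it (for $i=0,1,2$ this gives $8$, $12$, $6$, as expected). The second step is to count the $i$-cells contained in $e$: for $i<2$ these are the $i$-cells bounding the $2$-cell $e$, whose number is $c_{i\to 2}=2^{2-i}\binom{2}{i}$ by the same proposition applied inside $\CC_2$; for $i=2$ the only $2$-cell contained in $e$ is $e$ itself, giving $1$. Inclusion–exclusion therefore yields
\[
c_i(B_2(e)) \;=\; 2\,c_{i\to 3} \;-\; \delta_i,
\]
where $\delta_i = c_{i\to 2}$ for $i=0,1$ and $\delta_2 = 1$.

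It remains to check, for each $i\in\{0,1,2\}$, that $\delta_i = \tfrac{3-i}{6}\,c_{i\to 3}$, since then
\[
c_i(B_2(e)) \;=\; \left(2-\tfrac{3-i}{6}\right) c_{i\to 3} \;=\; \tfrac{9+i}{6}\,c_{i\to 3},
\]
which is the claimed formula. For $i=0$ one has $\delta_0=4=\tfrac{1}{2}\cdot 8$; for $i=1$, $\delta_1=4=\tfrac{1}{3}\cdot 12$; for $i=2$, $\delta_2=1=\tfrac{1}{6}\cdot 6$. There is no real obstacle here: the only mildly delicate point is that the case $i=2$ does not fit the formula $c_{i\to 2}$ literally (since the bounding relation is strict by Definition~1), so it must be handled as a separate but trivial sub-case. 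All the content of the proposition sits in identifying the block as two voxels glued along the face $e$ and recognizing that $\tfrac{3-i}{6}=c_{i\to 2}/c_{i\to 3}$ for $i=0,1$.
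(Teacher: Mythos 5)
Your proof is correct. The paper states this proposition without an in-text proof (it is imported from \cite{mn2013}), but your argument — writing $B_2(e)$ as two voxels glued along the face $e$ and counting $c_i(B_2(e)) = 2c_{i\rightarrow 3} - \delta_i$, where $\delta_i$ is the number of $i$-cells of the common face ($c_{i\rightarrow 2}$ for $i=0,1$ and $1$ for $i=2$) — is exactly the style of argument the paper itself uses for the analogous tandem count in Proposition \ref{Pro:E}, and your verification that $\delta_i = \frac{3-i}{6}c_{i\rightarrow 3}$ (i.e.\ $4,4,1$ against $8,12,6$) correctly produces the coefficient $\frac{9+i}{6}$, including the properly separated case $i=2$ where the strictness of the bounding relation would otherwise be an issue.
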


%\begin{Lem}\label{Lem:Lemma 2b}
%Let $e$ be an $2$-cell of $\CC_3$. Then the number of free
%$2$-cells of the $2$-block centered on $e$ is
%\[ c^*_{2}(B_{2}(e)) = 10. \]
%\end{Lem}

%\begin{Pro}\label{Pro:13}
%Let $e$ be a free $1$-cell that belongs to the center of an
%$2$-block $B_{2}(f)$, then $b_{2}(e) = 2$.
%\end{Pro}

In order to obtain our main result, we preliminarily need to prove the following result.

\begin{Pro}\label{Pro:E}
The number of $i$-cells ($i = 0, 1$) of an $1$-tandem $t_{1}(e)$ is
\[ c_i(t_{1}) = \frac{42 + 5i - i^2}{24} c_{i \to 3}.\]
\end{Pro}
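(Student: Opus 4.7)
The plan is to apply inclusion--exclusion to the two voxels that make up the $1$-tandem. By Definition \ref{Def:tandemn} we may write $t_1(e) = \{v_1,v_2\}$ with $v_1\cap v_2 = e$, where $e$ is a $1$-cell. Since the $i$-cells of the union $v_1 \cup v_2$ are those lying in $v_1$ plus those lying in $v_2$ minus the ones in their common intersection, for $i \in \{0,1\}$ I would write
$$c_i(t_1) = c_i(v_1) + c_i(v_2) - c_i(v_1\cap v_2).$$
By Proposition \ref{Pro:c_i_verso_j}, each $3$-voxel bounds exactly $c_{i\to 3}$ $i$-cells, so the first two terms contribute $2\,c_{i\to 3}$ and the only remaining quantity is $c_i(e)$, the number of $i$-cells of the edge $e$.

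Counting the $i$-cells of a $1$-cell is immediate: for $i=0$ they are the two endpoints of $e$, so $c_0(e)=2$; for $i=1$ the only $1$-cell inside $e$ is $e$ itself, so $c_1(e)=1$. Thus
$$c_i(t_1) = 2\,c_{i\to 3} - c_i(e).$$
It remains to verify that this matches the claimed closed form. Using $c_{0\to 3}=2^3\binom{3}{0}=8$ and $c_{1\to 3}=2^2\binom{3}{1}=12$ from Proposition \ref{Pro:c_i_verso_j}, one has $2\cdot 8 - 2 = 14 = \tfrac{42}{24}\cdot 8$ for $i=0$ and $2\cdot 12 - 1 = 23 = \tfrac{46}{24}\cdot 12$ for $i=1$, which together are captured by the single polynomial $\tfrac{42+5i-i^2}{24}c_{i\to 3}$.

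There is essentially no obstacle: the geometric content is the sole observation that the two voxels of a $1$-tandem meet along exactly one edge and nothing else, so no higher-order correction appears in the inclusion--exclusion. The quadratic expression in $i$ is simply the unique polynomial of degree $\le 2$ (in fact of degree $2$, since only two values are needed) that interpolates the ratios $c_i(t_1)/c_{i\to 3} = 7/4$ at $i=0$ and $23/12$ at $i=1$, packaged for uniformity with the analogous statement for $2$-blocks in Proposition \ref{Pro:1b}.
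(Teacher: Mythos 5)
Your proposal is correct and follows essentially the same argument as the paper: both count $2\,c_{i\to 3}$ cells from the two voxels of the tandem and subtract the cells shared along the $1$-hub $e$ (which the paper writes as $c_{i\to 1}$, i.e.\ $2$ for $i=0$ and $1$ for $i=1$, exactly your values), then check this matches the stated closed form. The only cosmetic difference is that the paper carries out the computation in general dimension $n$ before specializing, while you verify the two cases $i=0,1$ directly.
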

\begin{proof}
By Definition, $t_{1}(e)$ is composed of two strictly
$1$-adjacent voxels. Each of such voxels has exactly
$c_{i \to 3}$ $i$-cells. But some of these cells are repeated onto
$t_1(e)$. The number of these repeated $i$-cells coincides with the
number of $i$-cells of the $1$-hub $e$. Since
\[ \binom{1}{i} = \frac{(n-i)(n-i-1)}{n(n-1)} \binom{n}{i}, \]
we have:
\begin{align*}
    c_i(t_{n-2}(e)) & = 2 c_{i \to n} - c_{i \to n-2} \\
                 & = 2 \cdot 2^{n-i} \binom{n}{i} -
                 2^{n-2-i}\binom{n-2}{i}\\
                 & =\frac{7 n^2 - 7 n + 2 i n - i^2 - i}{4 n (n - 1)} c_{i \to
                 n}.\\
\end{align*}
\end{proof}

The following useful proposition was proved in \cite{mn2011}.

\begin{Pro}\label{Pro:number of G1 gaps}
The number of $1$-gaps of a digital object $D$ of $\CC_3$ is given by:
\begin{equation}\label{Eq:mia}
    g_{1} = 2 c_{2}^*-c_{1}^*.
\end{equation}
\end{Pro}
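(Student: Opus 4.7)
The plan is to prove $g_1 = 2 c_2^* - c_1^*$ by a two-way counting argument on the incidence structure whose points are the free $1$-cells of $D$, whose blocks are the free $2$-cells of $D$, and whose incidence is the bounding relation $e < f$ of Definition~\ref{Def: relazione di limitatezza}. Proposition~\ref{Pro:relazione fondamentale delle strutture di incidenza} then gives
\[ \sum_{e \in C_1^*(D)} r_e = \sum_{f \in C_2^*(D)} k_f, \]
and the job is to evaluate both sides.

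The block-side sum is the easy one. By Proposition~\ref{Pro:c_i_verso_j} every $2$-cell has exactly $c_{1 \rightarrow 2} = 4$ bounding $1$-cells. Moreover, if $f \in C_2^*(D)$ then, by definition of freeness, exactly one of the two voxels in $B_2(f)$ lies in $D$; hence for each of the four edges $e$ of $f$, the voxel of $B_1(e)$ sitting on the non-$D$ side of $f$ is missing from $D$, which forces $|B_1(e) \cap D| < 4$, i.e.\ $e$ is free. Therefore $k_f = 4$ for every block, and $\sum_f k_f = 4 c_2^*$.

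The point-side sum requires a local case analysis around each free $1$-cell $e$. The block $B_1(e)$ is the union of four voxels arranged in a $2 \times 2$ pattern around $e$; label them cyclically $v_1,v_2,v_3,v_4$ so that $\{v_1,v_3\}$ and $\{v_2,v_4\}$ are the two strictly $1$-adjacent (diagonal) pairs. The four $2$-cells of $\CC_3$ bounded by $e$ are the faces $f_i = v_i \cap v_{i+1}$ (indices mod $4$); such a face belongs to $C_2(D)$ iff at least one of its two voxels is in $D$, and to $C_2^*(D)$ iff exactly one of them is. Since $e$ is free, $n(e) := |B_1(e) \cap D| \in \{1,2,3\}$, and the case $n(e) = 2$ splits into an axial sub-case (two voxels sharing a face through $e$) and a diagonal sub-case (two strictly $1$-adjacent voxels), the latter being precisely the $1$-gap configuration over $e$. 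A direct enumeration of these four configurations shows that $r_e = 4$ in the diagonal (gap) case and $r_e = 2$ in each of the other three, so
\[ \sum_{e \in C_1^*(D)} r_e = 4 g_1 + 2(c_1^* - g_1) = 2 g_1 + 2 c_1^*. \]

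Equating the two expressions gives $2 g_1 + 2 c_1^* = 4 c_2^*$, whence $g_1 = 2 c_2^* - c_1^*$. The real content sits in the case analysis of the previous paragraph: the $1$-gap case is distinguished by the fact that \emph{all four} faces around $e$ lie in $D$ and are simultaneously free, whereas in every other configuration exactly two of them are free and the remainder are either absent from $D$ or non-free. It is this single-case excess of $+2$ free faces per gap that is responsible for the identity, and verifying it carefully (in particular separating the two $n(e)=2$ sub-cases) is the step that requires the most attention.
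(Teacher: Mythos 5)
Your proposal is correct. Note, however, that the paper itself offers no proof of Proposition~\ref{Pro:number of G1 gaps}: it simply imports the statement from \cite{mn2011}, so there is no in-paper argument to compare yours against. On its own merits your double count goes through: every edge of a free $2$-cell $f$ is itself a free $1$-cell of $D$ (the missing voxel of $B_2(f)$ lies in $B_1(e)$ for each of the four edges $e<f$, and $e$ belongs to $C_1(D)$ via the voxel of $B_2(f)$ that is in $D$), which gives the block-side total $4c_2^*$; and the four-case analysis of a free edge $e$ according to $|B_1(e)\cap D|\in\{1,2\text{ axial},2\text{ diagonal},3\}$ is exhaustive and correctly yields $r_e=4$ exactly in the diagonal (gap) configuration and $r_e=2$ otherwise, so $\sum_e r_e = 2c_1^* + 2g_1 = 4c_2^*$ and the formula follows. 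This is exactly the incidence-structure counting style (Proposition~\ref{Pro:relazione fondamentale delle strutture di incidenza}) that the paper uses for Propositions~\ref{Pro:6} and~\ref{Pro:F} and for Theorem~\ref{Teo:main}, so your argument has the added virtue of making the present paper self-contained, independently of whether it coincides with the proof given in \cite{mn2011}.
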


\begin{Pro}\label{Pro:6}
Let $e$ be a free vertex that bounds the center $e'$ of a $2$-block
$B_2(e')$. Then $b_1(e) = 4$.
\end{Pro}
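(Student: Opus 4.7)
The aim is to pin down the local structure of $\gamma$ around the free vertex $e$, using the two defining properties of a $0$-curve to show that only the two voxels forming $B_{2}(e')$ belong to $\gamma$ in a neighborhood of $e$. Once this is done, the count of free edges at $e$ follows by direct inspection.

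First I would unpack the hypothesis. By Definition~\ref{Def:blocchi} and Remark~\ref{Not:Numero di nVoxel che compone un blocco}, the $2$-block $B_{2}(e')$ consists of exactly two $3$-voxels $v_{1},v_{2}$, both of which must lie in $\gamma$. Since $v_{1}\cap v_{2}=e'$ and $e<e'$, the voxels $v_{1}$ and $v_{2}$ share the vertex $e$, so they are $0$-adjacent.

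Next I would prove the key localization: $\gamma\cap B_{0}(e)=\{v_{1},v_{2}\}$. The block $B_{0}(e)$ contains eight voxels, every one of which has $e$ as a vertex, so any two of them are $0$-adjacent. If some $v_{3}\in\gamma\cap B_{0}(e)$ were different from $v_{1}$ and $v_{2}$, then both $v_{2}$ and $v_{3}$ would lie in $A_{0}(v_{1})\cap\gamma$, yet $v_{2}$ and $v_{3}$ would themselves be $0$-adjacent through $e$, contradicting the second condition in the definition of a digital $0$-curve. This is where the curve hypothesis is essential, and where I expect the main conceptual obstacle to sit: extracting strong local information from the non-local conditions of the definition.

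Finally I would count. By Proposition~\ref{Pro:c_i_da_j}, $e$ is bounded by $c_{0\leftarrow 1}=6$ edges of $\CC_{3}$. For any such edge $f$, the four voxels of $B_{1}(f)$ all contain $e$, so $B_{1}(f)\subseteq B_{0}(e)$, and combined with the localization we obtain $B_{1}(f)\cap\gamma\subseteq\{v_{1},v_{2}\}$. Consequently $f\in C_{1}(\gamma)$ iff $f<v_{1}$ or $f<v_{2}$, and in that case at least two of the four voxels of $B_{1}(f)$ are missing from $\gamma$, so $f$ is automatically free. Inclusion--exclusion, based on $v_{1}\cap v_{2}=e'$, then gives
\[
b_{1}(e) = |\{f\colon e<f<v_{1}\}| + |\{f\colon e<f<v_{2}\}| - |\{f\colon e<f<e'\}| = 3+3-2 = 4,
\]
since a vertex of a cube has $3$ incident cube-edges and a vertex of a square has $2$ incident square-edges.
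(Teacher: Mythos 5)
Your argument is correct, but it takes a genuinely different route from the paper's. The paper stays entirely inside the two-voxel block: it applies the incidence identity of Proposition \ref{Pro:relazione fondamentale delle strutture di incidenza} to the structure $(C_0(B_2(e')),C_1(B_2(e')),<)$, uses the counts $|C_0(B_2(e'))|=12$ and $|C_1(B_2(e'))|=20$ from Proposition \ref{Pro:1b} together with $k_a=c_{0\to 1}=2$ for every edge, splits the twelve vertices into the four lying on $e'$ (each assigned the same degree $b_1(e)$, implicitly by symmetry) and the eight others (degree $3$ by Proposition \ref{Pro:3}), and solves $4\,b_1(e)+24=40$. You instead count directly by inclusion--exclusion, $3+3-2=4$, after first proving the localization $\gamma\cap B_0(e)=\{v_1,v_2\}$ from the two axioms of a digital $0$-curve. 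That localization step appears nowhere in the paper, yet it is exactly what legitimizes identifying a count performed inside $B_2(e')$ with $b_1(e)$ taken with respect to $\gamma$, which is the form in which Proposition \ref{Pro:6} is actually invoked in the proof of Theorem \ref{Teo:main}; the paper leaves both this identification and the equal-degree claim for the four vertices of $e'$ implicit. In exchange, the paper's double-counting requires no curve hypothesis at all (it is a statement about the block alone) and follows the same incidence-structure template used in Propositions \ref{Pro:6}, \ref{Pro:F} and Theorem \ref{Teo:main}, whereas your proof is more elementary (it needs neither Proposition \ref{Pro:1b} nor the incidence identity) and makes explicit where the curve conditions, and the freeness of the edges being counted, actually enter.
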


\begin{proof}
Let us consider the incidence structure $(C_0(B_2(e')),
C_1(B_2(e')), <)$. By Proposition \ref{Pro:relazione fondamentale
delle strutture di incidenza}, we have
\[ \sum_{a \in C_0(B_2(e'))} r_a = \sum_{a \in C_1(B_2(e'))}k_a. \]
Let us note that, by Proposition \ref{Pro:1b}, we have
$|C_1(B_2(e'))| = 20$ and $|C_0(B_2(e'))| = 12$.
\\
Since, for any $a \in C_1(B_2(e'))$ it is $k_a = c_{0 \to 1} = 2$,
we have
\begin{equation}\label{Equ:1}
    \sum_{a \in C_1(B_2(e'))} k_a = 2 \cdot | C_1(B_2(e'))| = 40.
\end{equation}
Let us now consider the sets:
\[ F = \{ a \in C_0(B_2(e')) \colon a < e' \} \]
and
\[ G = \{a \in C_0(B_2(e')) \colon a \nless e'\}. \]
Since $\{ F,G \}$ forms a partition of $C_0(B_2(e'))$, we can write
\[ \sum_{a \in C_0(B_2(e'))} r_a = \sum_{a \in F} r_a + \sum_{a \in G} r_a. \]
For any $a \in F$, let us set $r_a = b_1(e)$. We have
\begin{equation}\label{Equ:2}
    \sum_{a \in F} r_a  = |F| b_1(e) = c_{0 \to 2} b_1(e) = 4 b_1(e).
\end{equation}
Instead, thanks to Proposition \ref{Pro:3}, for any $a \in G$, it
is
\[ r_a = b_1(a) = \binom{3-0}{1-0} =3, \]
and so
\begin{equation}\label{Equ:3}
    \sum_{a \in G} r_a  = 3 \cdot |G| = 3 (|C_0(B_2(e'))| - c_{0 \to 2}) = 3(12 - 4) = 24.
\end{equation}
To sum up, by using Equations \eqref{Equ:1}, \eqref{Equ:2}, and
\eqref{Equ:3},we can write $4 b_1(e) + 24 = 40$, from which we get
the thesis.
\end{proof}

\begin{Pro}\label{Pro:A}
Let $\gamma$ be a digital curve of $\CC_3$. Then the number of $0$-cells that
bound some non-free $2$-cell is $4 c_2'$.
\end{Pro}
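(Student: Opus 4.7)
The plan is to double-count the set of incidence pairs
\[
P = \{(p, f) : f \in C_2'(\gamma),\ p \text{ a $0$-cell with } p < f\}.
\]
Summing over $f$, each non-free $2$-cell is bounded by exactly $c_{0 \to 2} = 4$ vertices (Proposition \ref{Pro:c_i_verso_j} with $i=0$, $j=2$), so $|P| = 4 c_2'$. Summing over $p$, the quantity the proposition asks for is the number of $0$-cells appearing as a first coordinate of some pair, and this equals $|P|$ precisely when every such vertex appears in exactly one pair, i.e. when no $0$-cell bounds two distinct non-free $2$-cells of $\gamma$. So the whole proof reduces to this disjointness statement.

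I would then establish the disjointness by contradiction. Suppose $f_1 \ne f_2$ are non-free $2$-cells of $\gamma$ with $p < f_1$ and $p < f_2$, and split on whether $f_1, f_2$ share a voxel of $\gamma$. In the \emph{shared-voxel} case one may write $f_1 = v \cap v_1$ and $f_2 = v \cap v_2$ with $v, v_1, v_2 \in \gamma$ and $v_1 \ne v_2$; two distinct faces of the cube $v$ are either opposite (so $f_1 \cap f_2 = \emptyset$, immediately ruling out a common bounding $p$) or edge-adjacent. In the edge-adjacent subcase, a direct coordinate check (take $v = [0,1]^3$ with $f_1$ the top face and $f_2$ the right face, giving $v_1 = [0,1]^2 \times [1,2]$ and $v_2 = [1,2] \times [0,1]^2$) yields $v_1 \cap v_2 = \{1\} \times [0,1] \times \{1\}$, a $1$-cell; thus $v_1$ and $v_2$ are $1$-adjacent and a fortiori $0$-adjacent, contradicting the curve axiom that two $0$-neighbours of $v$ cannot themselves be $0$-adjacent. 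In the \emph{disjoint-voxels} case, $f_1 = u_1 \cap u_2$ and $f_2 = u_3 \cap u_4$ with $\{u_1, u_2\} \cap \{u_3, u_4\} = \emptyset$; all four voxels contain the corner $p$, so $u_2, u_3, u_4$ are three pairwise distinct voxels of $\gamma$ each sharing at least $\{p\}$ with $u_1$, giving $|A_0(u_1)| \ge 3$ and contradicting $|A_0(u_1)| \le 2$.

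The only delicate spot will be the edge-adjacent subcase above: one must notice that the shared edge of the two faces $f_1, f_2$ of $v$ forces a shared edge between the outer voxels $v_1, v_2$, so that the curve axiom applied to $v$ can actually be invoked. Once this small geometric fact is pinned down by the coordinate computation, both cases are ruled out and the double count collapses to $|\mathcal{V}| = |P| = 4 c_2'$, as required.
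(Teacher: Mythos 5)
Your proof is correct, and it is in fact more complete than the paper's own argument. The paper's proof is a one-line count: each non-free $2$-cell is bounded by $c_{0\to 2}=4$ vertices, hence $4c_2'$ vertices in total -- tacitly assuming that no $0$-cell is counted twice, i.e.\ that distinct non-free $2$-cells of $\gamma$ never share a bounding vertex. You correctly identified this disjointness as the actual content of the statement and proved it, splitting into the shared-voxel case (where edge-adjacent faces of the common voxel force the two outer voxels to share an edge, violating the second curve axiom, while opposite faces share no vertex at all) and the disjoint-voxel case (where four distinct voxels meeting at $p$ give a voxel with at least three $0$-neighbours in $\gamma$, violating $|A_0(v)|\le 2$); both case analyses check out, including the coordinate computation showing $v_1\cap v_2$ is a $1$-cell. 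This is precisely where the hypothesis that $\gamma$ is a digital curve enters: for a general digital object the statement is false (e.g.\ four voxels forming a $2\times 2\times 1$ block have $c_2'=4$ but only $10$ vertices bounding non-free $2$-cells), so the step you made explicit is not a pedantic detail but the load-bearing part of the proof that the paper leaves implicit.
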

\begin{proof}
Since $c'_2(\gamma)$ coincides with the number of $2$-block of
$\gamma$, and since any non-free $2$-cell is bounded by $c_{0 \to 2}
= 4$ $0$-cells, the number of $0$-cells that bound some non-free
$2$-cell il exactly $4 c_2'$.
\end{proof}

\begin{Pro}\label{Pro:c_i_da_j}
For any $i,j \in \mathbb{N}$ such that $0 \le i <j$, it is
\[
c_{i \leftarrow j} = 2^{j-i}\binom{n-i}{j-i}.
\]
\end{Pro}

\begin{Pro}\label{Pro:C}
Let $D$ be a digital object
%(in particular a digital curve)
of $\CC_3$ and $e \in \mathcal H_0$. Then $b_1(e)=6$.
\end{Pro}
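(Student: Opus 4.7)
The plan is to unpack the $0$-hub condition and count directly, using the already-established formulas for $c_{i\leftarrow j}$. Since $e\in\mathcal H_0$, Definitions \ref{Def:tandemn} and \ref{Def:gapn} guarantee a $0$-block $B_0(e)$ of $2^3=8$ voxels (all bounded by $e$) together with a $0$-tandem $\{v_1,v_2\}$ such that $B_0(e)\setminus D=\{v_1,v_2\}$, and moreover $v_1\cap v_2=e\in\CC_3^{(0)}$. Concretely the two missing voxels sit at opposite corners of the $2\times 2\times 2$ cluster around $e$ and share no cell of positive dimension.

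I would then count the $1$-cells incident to $e$ in $\CC_3$. Proposition \ref{Pro:c_i_da_j}, applied with $(i,j,n)=(0,1,3)$, yields $c_{0\leftarrow 1}=2\binom{3}{1}=6$, so there are exactly six $1$-cells $a$ with $e<a$. The statement then reduces to showing that each such $a$ belongs to $bd(D)$, i.e.\ is both a $1$-cell of $D$ and free.

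For a fixed $a$ with $e<a$, Proposition \ref{Pro:c_i_da_j} applied this time with $(i,j,n)=(1,3,3)$ gives $c_{1\leftarrow 3}=4$: exactly four voxels contain $a$, and since each of them contains $e$ they all lie in $B_0(e)$. The key observation is that no $1$-cell can be contained in both $v_1$ and $v_2$, because $v_1\cap v_2=e$ is only a $0$-cell. Hence at most one (in fact exactly one) of the four voxels containing $a$ belongs to $\{v_1,v_2\}$, so at least three lie in $D$ while one lies outside $D$; this makes $a$ a free $1$-cell of $D$, hence of $bd(D)$.

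The argument is largely mechanical once the tandem geometry is fixed; the one delicate point is the step asserting that $v_1$ and $v_2$ cannot share a $1$-cell, which is precisely the content of the strict $0$-adjacency clause of Definition \ref{Def:tandemn}. Iterating over the six $1$-cells incident to $e$ then yields $b_1(e)=6$, as required.
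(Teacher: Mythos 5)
Your proof is correct, and it is actually more informative than the one given in the paper. The paper's proof simply identifies $b_1(e)$ with the maximum number $c_{0 \leftarrow 1} = 2\binom{3}{1} = 6$ of $1$-cells bounded by a $0$-cell (Proposition \ref{Pro:c_i_da_j}), without explaining why, for a $0$-hub, every one of those six $1$-cells really is a (free) $1$-cell of $D$; you supply exactly that missing verification by using the gap hypothesis: all voxels of $B_0(e)$ other than the tandem pair lie in $D$, each $1$-cell $a$ with $e<a$ has its four surrounding voxels inside $B_0(e)$, and since $v_1\cap v_2=e$ these four voxels contain at most one of $v_1,v_2$, so $a$ is a cell of $D$ and is free. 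The only point you state without justification is the parenthetical ``in fact exactly one'', which is what freeness actually requires (if neither $v_1$ nor $v_2$ contained $a$, all four voxels around $a$ would belong to $D$ and $a$ would be non-free). This is easily closed in one line: by Proposition \ref{Pro:G} each of $v_1$ and $v_2$ contains exactly three $1$-cells bounded by $e$, and by your strict-adjacency observation these two triples are disjoint, so together they exhaust all six $1$-cells through $e$; hence each such $a$ lies in exactly one of $v_1,v_2$, and your conclusion $b_1(e)=6$ follows. With that sentence added, your argument is a complete and rigorous replacement for the paper's one-line proof, and it makes visible where the hypothesis $e\in\mathcal H_0$ is actually used.
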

\begin{proof}
Since the number $b_1(e)$ of $1$-cells of $D$ bounded by
$e$ coincides with the maximum number of $1$-cells bounded by a
$0$-cell, that is, by Proposition \ref{Pro:c_i_da_j}
\[ b_{1}(e) = c_{0 \leftarrow 1} = 2^{1-0} \binom{3-0}{1-0} =6. \]
\end{proof}

We have the following lemma.
\begin{Lem}\label{Lem:E}
The number of  $0$-cells and $1$-cells of a $1$-tandem $t_1(e)$ is $c_0(t_1(e))=14$ and
$c_1(t_1(e)) = 23$, respectively.
\end{Lem}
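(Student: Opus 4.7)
The plan is to obtain both numbers as direct numerical specializations of Proposition \ref{Pro:E}, which already provides a closed-form expression for $c_i(t_1(e))$ as a function of $i$. No new geometric argument seems necessary; the lemma really is just the result of substituting $i=0$ and $i=1$ into the formula $c_i(t_1)=\frac{42+5i-i^2}{24}\,c_{i\to 3}$.

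The one ingredient still needed is the value of the auxiliary constants $c_{0\to 3}$ and $c_{1\to 3}$, which I would read off from Proposition \ref{Pro:c_i_verso_j}: since $c_{i\to j}=2^{j-i}\binom{j}{i}$, one has $c_{0\to 3}=2^{3}\binom{3}{0}=8$ and $c_{1\to 3}=2^{2}\binom{3}{1}=12$.

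With these in hand I would simply compute
\[
c_0(t_1(e)) \;=\; \frac{42+5\cdot 0-0^2}{24}\cdot c_{0\to 3} \;=\; \frac{42}{24}\cdot 8 \;=\; 14,
\]
and
\[
c_1(t_1(e)) \;=\; \frac{42+5\cdot 1-1^2}{24}\cdot c_{1\to 3} \;=\; \frac{46}{24}\cdot 12 \;=\; 23,
\]
which yields the stated values. The only step that requires any care is the arithmetic verification that the fractions reduce to integers, but this is immediate. There is no real obstacle here; the lemma is essentially a tabulation of Proposition \ref{Pro:E} at the two specific values of $i$ that will be used later in the proof of the main theorem.
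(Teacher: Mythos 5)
Your proposal is correct and follows exactly the paper's route: the paper likewise proves the lemma by specializing Proposition \ref{Pro:E} to $i=0$ and $i=1$ (with $n=3$), you merely make the arithmetic explicit via $c_{0\to 3}=8$ and $c_{1\to 3}=12$ from Proposition \ref{Pro:c_i_verso_j}. Nothing further is needed.
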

\begin{proof}
It directly follows by Proposition \ref{Pro:E} for $n=3$ and $i=0$ or  $i=1$, respectively.
\end{proof}

\begin{Pro}\label{Pro:F}
Let $e$ be a $0$-cell that bounds a $1$-hub. Then $b_1(e) = 5$.
\end{Pro}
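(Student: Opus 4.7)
The plan is to mirror the incidence-structure argument of the proof of Proposition \ref{Pro:6}, this time applied to the $1$-block $B_1(e')$ of the $1$-hub $e'$ bounded by $e$, since $b_1(e)$ amounts to counting the $1$-cells of $B_1(e')$ incident to the endpoint $e$. I would set up the incidence structure $(C_0(B_1(e')), C_1(B_1(e')), <)$ and apply Proposition \ref{Pro:relazione fondamentale delle strutture di incidenza} to obtain $\sum_{a \in C_0(B_1(e'))} r_a = \sum_{b \in C_1(B_1(e'))} k_b$. Since $B_1(e')$ is the union of the $2^{3-1}=4$ voxels sharing the edge $e'$ (a $2\times 2\times 1$ parallelepiped of voxels), a direct enumeration yields $|C_0(B_1(e'))|=18$ and $|C_1(B_1(e'))|=33$; using $k_b = c_{0\leftarrow 1}=2$, the right-hand side equals $2\cdot 33 = 66$.

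Next, I would partition $C_0(B_1(e'))$ into $F=\{a \colon a<e'\}$ (the two endpoints of $e'$, so $|F|=c_{0\rightarrow 1}=2$) and $G=C_0(B_1(e'))\setminus F$ of size $16$. By the symmetry between the two endpoints of $e'$, $r_a=b_1(e)$ for each $a\in F$, giving $\sum_{a \in F}r_a = 2\,b_1(e)$. The set $G$ naturally splits into the $8$ corner vertices of the block (each contained in exactly one voxel of $B_1(e')$) and the $8$ edge-middle vertices (each contained in exactly two voxels of $B_1(e')$ sharing a face through the vertex). Proposition \ref{Pro:3} gives $r_a=b_1(a)=\binom{3}{1}=3$ for each corner, while for each edge-middle the two voxels containing it each contribute $3$ incident edges and share $2$ of them along their common face, so $r_a = 3+3-2 = 4$. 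Combining, $\sum_{a \in G} r_a = 8\cdot 3 + 8\cdot 4 = 56$, and the incidence identity reads $2\,b_1(e)+56=66$, yielding $b_1(e)=5$.

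The main step requiring care is the classification of the vertices in $G$ and, in particular, the verification that the $8$ edge-middle vertices contribute $r_a=4$ rather than $3$ -- a mild complication compared to Proposition \ref{Pro:6}, where all non-hub vertices lay in a single voxel and their degree followed directly from Proposition \ref{Pro:3}.
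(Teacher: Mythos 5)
Your argument is correct, but it follows a genuinely different local model from the paper's: you run the incidence-structure computation on the full $1$-block $B_1(e')$ (four voxels, with $|C_0(B_1(e'))|=18$ and $|C_1(B_1(e'))|=33$, both of which are right), whereas the paper works with the $1$-tandem $t_1(e')$ (the two voxels of $\gamma$ only, with $c_0(t_1)=14$ and $c_1(t_1)=23$ supplied by Proposition \ref{Pro:E} and Lemma \ref{Lem:E}). The tandem buys simplicity: its two voxels meet only in the hub $e'$, so every $0$-cell other than the two endpoints of $e'$ lies in a single voxel and has degree $3$ by Proposition \ref{Pro:3}; your block instead forces the extra classification of the eight edge-middle vertices of degree $4$, which you carry out correctly ($2\,b_1(e)+8\cdot 3+8\cdot 4=66$ gives $b_1(e)=5$), while avoiding any appeal to Proposition \ref{Pro:E}. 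The one step you should justify rather than assert is the opening identification that $b_1(e)$ \emph{is} the number of $1$-cells of $B_1(e')$ incident to $e$: by definition $b_1(e)$ counts $1$-cells of the object $\gamma$, and since $e'$ is a $1$-hub, two of the four voxels of $B_1(e')$ do \emph{not} belong to $\gamma$. The identification holds because (i) the curve conditions prevent any voxel of $\gamma$ outside the tandem from containing $e$ (an assumption the paper also uses tacitly when it replaces $\gamma$ by $t_1(e')$), and (ii) every $1$-cell of the block incident to an endpoint of $e'$ is already a $1$-cell of one of the two tandem voxels, so the two absent voxels contribute no new edges at $e$; without (ii) the block count could a priori overcount $b_1(e)$. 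Finally, a notational slip: the degree of a $1$-cell in your structure is $c_{0\to 1}=2$, not $c_{0\leftarrow 1}$ (which equals $6$) --- the numerical value you use is the correct one.
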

\begin{proof}
Let $e'$ a $1$-hub that is bounded by $e$, and $t_1(e')$ the related $1$-tandem.
Moreover, let us consider the incidence structure $(C_0(t_1(e')),
C_1(t_1(e')), <)$. By Proposition \ref{Pro:relazione
fondamentale delle strutture di incidenza}, we can write
\[ \sum_{a \in C_0(t_1(e'))} r_a = \sum_{a \in C_1(t_1(e'))} k_a. \]
By Lemma \ref{Lem:E}, we have $|C_0(t_1(e'))| = 14$ and
 $|C_1(t_1(e'))|= 23$. Moreover, since for any $ a \in C_1(t_1(e'))$, $k_a = c_{0 \to
 2}=2$, it is
 \[ \sum_{a \in C_1(t_1(e'))} k_a  = 2 \cdot |C_1(t_1(e'))| = 46.\]
 Now, let us set
 \[ F = \{ a \in C_0(t_1(e')) \colon a < e' \}  \]
and
 \[ G= \{  a \in C_0(t_1(e')) \colon a \nless e' \}. \]
 Since $\{ F,G \}$ is a partition of $C_0(t_1(e'))$, we have
 \[ \sum_{a \in C_0(t_1(e'))} r_a = \sum_{a \in F} r_a + \sum_{a \in G} r_a. \]
 Let us calculate $\sum_{a \in F} r_a$. If we set $r_a = b_1(e)$, we
 have
 \[ \sum_{a \in F} r_a = |F| b_1(e) = c_{o \to 1} b_1(e) = 2 b_1(e).\]
 Now, let us calculate $\sum_{a \in G} r_a$. By Proposition
 \ref{Pro:3}, for any $a \in G$, it is
 \[ r_a = b_1(a) = 3.  \]
Hence we get
\[ \sum_{a \in G} r_a = 3 \cdot |G| = 3 (|C_0(t_1(e'))| - c_{0 \to 1}) = 36.\]
To sum up, we have $2 b_1(e) + 36 = 46$, from which we get $b_1(e) =
5$.
\end{proof}

\begin{Pro}\label{Pro:Fb}
Let $\gamma$ be a digital curve of $\CC_3$. Then the number of $0$-cells that
bounds some $1$-hub of $\gamma$ is $2g_1$.
\end{Pro}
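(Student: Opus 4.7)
The plan is to apply the fundamental incidence-structure identity (Proposition \ref{Pro:relazione fondamentale delle strutture di incidenza}) to the bipartite structure whose points are the $0$-cells of $\gamma$ that bound some $1$-hub and whose blocks are the $1$-hubs themselves, with the bounding relation $<$ as incidence. If $V$ denotes this set of $0$-cells, the right-hand side of the fundamental equation is $\sum_{e' \in \mathcal H_1} k_{e'} = c_{0\to 1}\cdot g_1 = 2 g_1$, since every $1$-cell is bounded by exactly two vertices. The strategy is therefore to show that every $a \in V$ has degree $r_a = 1$, so that $|V|= \sum_{a\in V} r_a = 2g_1$.

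The hard part will be ruling out the possibility that a single $0$-cell $e$ bounds two distinct $1$-hubs. To handle this, I would assume for contradiction that $e$ bounds two distinct $1$-hubs $e'_1, e'_2$, and consider the corresponding $1$-tandems $\{v_1,v_2\}$ and $\{w_1,w_2\}$ of $\gamma$. Because the vertex $e$ lies on both hubs, each of the four voxels contains $e$, so any two distinct voxels among them share $e$ and are therefore $0$-adjacent.

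I would then split into cases according to the overlap of the two tandems. If they are disjoint, then $v_1$ has three distinct $0$-adjacent voxels in $\gamma$ (namely $v_2$, $w_1$, $w_2$), contradicting $|A_0(v_1)|\le 2$. If they share a single voxel, say $v_1 = w_1$, then $v_2 \ne w_2$ (otherwise $e'_1 = v_1\cap v_2 = w_1\cap w_2 = e'_2$), and $v_2$, $w_2$ are both $0$-adjacent to $v_1$ and, by the preceding remark, $0$-adjacent to each other through $e$, violating the second defining condition of a digital curve. If both voxels coincide, the two tandems agree and so do their hubs, against our assumption. In every case we reach a contradiction, which establishes $r_a = 1$ and hence the thesis.
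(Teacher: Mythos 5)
Your proof is correct, but it takes a genuinely more careful route than the paper. The paper's own argument is a one-line count: each of the $g_1$ hubs is bounded by $c_{0 \to 1} = 2$ vertices, hence the answer is $2g_1$ --- tacitly assuming that two distinct $1$-hubs of $\gamma$ never share a bounding $0$-cell, and never invoking the curve hypothesis at all. You make that hidden injectivity claim explicit and prove it: after writing the block-side degree sum as $\sum_{e' \in \mathcal H_1} k_{e'} = c_{0\to 1}\, g_1 = 2g_1$, you show every $0$-cell incident to a hub has degree $1$ via a case analysis on how the two tandems overlap, using both defining conditions of a digital $0$-curve (the bound $|A_0(v)| \le 2$ when the tandems are disjoint, the non-adjacency of the two neighbours when the tandems share one voxel, and the fact that $v_1 \cap v_2$ determines the hub when the tandems coincide). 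This is a real strengthening rather than a detour: for a general digital object two $1$-hubs can perfectly well share a vertex, so the disjointness genuinely depends on $\gamma$ being a curve, and your argument supplies the justification for reading the statement as a count of distinct $0$-cells rather than a count with multiplicity. What the paper's shortcut buys is brevity; what yours buys is that Proposition \ref{Pro:Fb}, as literally stated, is actually proved.
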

\begin{proof}
Since any $1$-hub is bounded by $c_{0\to
1}$ $0$-cell, we have that the number of $0$-cells that bound some
$1$-hub is exactly $2 g_1$.
\end{proof}

By applying  Proposition \ref{Pro:3} with  i = 0 and $j = 1$ we can easily prove the following proposition.

\begin{Pro}\label{Pro:G}
Let $e$ be a $0$-cell of a voxel $v \in \CC_3$. Then $b_1(e) = 3$.
\end{Pro}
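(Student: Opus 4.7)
The plan is to simply specialise Proposition \ref{Pro:3} to the particular parameters at hand. That proposition states that for any voxel $v$, any $i$-cell $e$ of $v$ with $0\le i\le 2$, and any $j$ with $i<j\le 2$, we have $b_j(e)=\binom{3-i}{j-i}$.

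First I would observe that the hypotheses of Proposition \ref{Pro:G} fit the hypotheses of Proposition \ref{Pro:3}: the cell $e$ is a $0$-cell of a voxel $v\in\CC_3$, so $i=0$, and we want to count $1$-cells bounded by $e$, so $j=1$, which indeed satisfies $0=i<j=1\le 2$.

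Then I would substitute these values into the formula of Proposition \ref{Pro:3} and compute
\[
b_1(e)=\binom{3-0}{1-0}=\binom{3}{1}=3,
\]
which is the claimed equality. There is no real obstacle here: the whole statement is a direct numerical specialisation of the previously established general formula, so the proof amounts to plugging in $i=0$ and $j=1$ and evaluating the binomial coefficient.
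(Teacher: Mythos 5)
Your proposal is correct and matches the paper exactly: the paper itself obtains this result by applying Proposition \ref{Pro:3} with $i=0$ and $j=1$, yielding $b_1(e)=\binom{3}{1}=3$. Nothing further is needed.
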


\begin{Teo}\label{Teo:main}
Let $\gamma$ be a digital curve of $\CC_3$. Then the number of its $0$-gaps is given by:
\[g_0 = \sum_{i = 0}^3 (-1)^{i+1} 2^i c_i. \]
\end{Teo}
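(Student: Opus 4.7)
The plan is to apply the fundamental incidence identity (Proposition~\ref{Pro:relazione fondamentale delle strutture di incidenza}) to the pair $(C_0(\gamma), C_1(\gamma), <)$ and recover $g_0$ from the resulting linear equation. Since every $1$-cell of $\gamma$ has its two endpoints in $C_0(\gamma)$, the block-side sum is $\sum_{a \in C_1(\gamma)} k_a = 2 c_1$. The work lies in evaluating the point-side sum $\sum_{e \in C_0(\gamma)} b_1(e)$, which I would obtain by partitioning $C_0(\gamma)$ according to the local configuration of voxels of $\gamma$ meeting at each vertex.

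The curve axioms provide the crucial bound. Three distinct voxels meeting at a common vertex would be mutually $0$-adjacent, violating the no-triangle condition; hence for each $e \in C_0(\gamma)$ the number $n(e)$ of voxels of $\gamma$ containing $e$ is either $1$ or $2$, and when $n(e) = 2$ the two voxels are strictly $i$-adjacent for exactly one $i \in \{0,1,2\}$. This yields four disjoint classes of vertices. Class (a): $n(e) = 1$, giving $b_1(e) = 3$ by Proposition~\ref{Pro:G}. Class (b): $n(e) = 2$ with voxels strictly $0$-adjacent, so $e \in \mathcal H_0$ and $b_1(e) = 6$ by Proposition~\ref{Pro:C}. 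Class (c): $n(e) = 2$ with voxels strictly $1$-adjacent, so $e$ bounds a $1$-hub and $b_1(e) = 5$ by Proposition~\ref{Pro:F}. Class (d): $n(e) = 2$ with voxels strictly $2$-adjacent, so $e$ bounds a non-free $2$-cell and $b_1(e) = 4$ by Proposition~\ref{Pro:6} ($e$ is automatically free, since only two of the eight voxels of $B_0(e)$ lie in $\gamma$). Propositions~\ref{Pro:Fb} and~\ref{Pro:A} give the cardinalities of classes (c) and (d) as $2 g_1$ and $4 c_2'$; class (b) contributes $g_0$; and class (a) accounts for the remaining $c_0 - g_0 - 2 g_1 - 4 c_2'$ vertices.

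Substituting these counts into the incidence identity gives
\[ 2 c_1 \;=\; 3(c_0 - g_0 - 2 g_1 - 4 c_2') + 6 g_0 + 10 g_1 + 16 c_2' \;=\; 3 c_0 + 3 g_0 + 4 g_1 + 4 c_2', \]
so $3 g_0 = 2 c_1 - 3 c_0 - 4 g_1 - 4 c_2'$. Two further reductions then close the argument. First, the curve condition forces $c_1' = 0$: a non-free $1$-cell would embed four mutually $0$-adjacent voxels of $\gamma$, contradicting $|A_0(v)| \le 2$. Consequently Proposition~\ref{Pro:number of G1 gaps} reads $g_1 = 2 c_2^* - c_1^* = 2(c_2 - c_2') - c_1$. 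Second, Proposition~\ref{Pro:c2} (with $n = 3$) yields $c_2' = 6 c_3 - c_2$. After both substitutions the right-hand side collapses to $-3 c_0 + 6 c_1 - 12 c_2 + 24 c_3$, and dividing by $3$ produces the claimed formula.

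\textbf{Main obstacle.} The delicate step is the partition: one must verify that the four cases are mutually exclusive and, equally important, that the counts $g_0$, $2 g_1$, $4 c_2'$ really enumerate \emph{distinct} $0$-cells rather than vertex--cell incidences with multiplicity (so that a vertex is not, say, simultaneously the bound of two $1$-hubs, or of a $1$-hub and a non-free $2$-cell). Both points reduce to the bound $n(e) \le 2$, a direct consequence of the no-triangle axiom for digital curves. Once that bound is firmly in place, the remainder of the proof is routine algebra.
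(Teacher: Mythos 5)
Your proposal is correct and follows essentially the same route as the paper: the incidence structure $(C_0(\gamma),C_1(\gamma),<)$, the four-way partition of the vertices with weights $6,5,4,3$ and cardinalities $g_0$, $2g_1$, $4c_2'$ and the remainder, then elimination of $g_1$ and $c_2'$ via Propositions~\ref{Pro:number of G1 gaps} and~\ref{Pro:c2}. In fact your version is slightly more careful than the paper's, since the bound $n(e)\le 2$ from the curve axioms cleanly justifies that the four classes really partition $C_0(\gamma)$, and you make explicit the fact $c_1'=0$ (so $c_1^*=c_1$) which the paper uses tacitly when substituting $g_1=2c_2^*-c_1^*$.
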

\begin{proof}
Let us consider the incidence structure $(C_0(\gamma),
C_1(\gamma),<)$. By Preposition \ref{Pro:relazione fondamentale
delle strutture di incidenza}, it is
\[ \sum_{a \in C_0(\gamma)} r_a = \sum_{a \in C_1(\gamma)} k_a. \]
Evidently, for any $a \in C_1(\gamma)$, we have that $k_a = 2$. So
\begin{equation}\label{Equ:principale1}
    \sum_{a \in C_1(\gamma)} k_a = 2 \cdot |C_1(\gamma)| = 2 c_1.
\end{equation}

%---------------------

Let us denote by $H_i(\gamma)$, $i=0,1$, and by $C'_2(\gamma)$,
the sets of $0$- and $1$-hubs and the set of non-free $2$-cells of
$\gamma$, respectively.
\\
Let us now calculate $\sum_{a \in \CC_0(\gamma)} r_a$. In order to
do that, let us consider the following sets of $0$-cells.
\\
\[A = \{ c \in \CC_0(\gamma) \colon c \in H_0(\gamma) \}.\]
\[B = \{ c \in \CC_0(\gamma) \colon c<e, e \in H_1(\gamma)  \}.\]
\[C = \{ c \in \CC_0(\gamma) \colon c <e, e \in \CC'_2(\gamma). \}\]
\[D = \CC_0(\gamma) \setminus (A \cap B \cap C). \]
Since $\{A,B,C,D\}$ forms a partition of $\CC_0(\gamma)$, we have
\[ \sum_{a \in \CC_0(\gamma)} r_a = \sum_{a \in A} r_a + \sum_{a \in B} r_a + \sum_{a
\in C} r_a + \sum_{a \in D} r_a.\]

Let us calculate $\sum_{a \in A} r_a$. By Proposition \ref{Pro:C},
for any $a \in A$ it is $r_a = 6$. Evidently $|A| = g_0$. Hence
\begin{equation}\label{Equ:a}
    \sum_{a \in A} r_a = r_a \cdot |A| =6 g_0.
\end{equation}
%----------
Let us calculate $\sum_{a \in B} r_a$. By Proposition \ref{Pro:F},
for any $a \in B$, it is $r_a = 5$. Moreover, by Proposition
\ref{Pro:Fb}, it is $|B|=2 g_1$. So
\begin{equation}\label{Equ:b}
    \sum_{a \in A} r_a = r_a \cdot |B| = 10 g_1.
\end{equation}
%-------------
Let us calculate $\sum_{a \in C} r_a$. By Proposition \ref{Pro:6},
for any $a \in C$,  $r_a = 4$, and, by Proposition \ref{Pro:A}, $|C|
= 4 c_2'$. It follows that
\begin{equation}\label{Equ:c}
    \sum_{a \in A} r_a = r_a \cdot |C| = 16 c'_2.
\end{equation}
%----------------
Finally, let us calculate $\sum_{a \in D} r_a$. By Proposition
\ref{Pro:G}, for any $a \in D$, it is $r_a = 3$. Moreover, $|D| =
c_0 - 4 c_2' - 2g_1 - g_0$. So
\begin{equation}\label{Equ:d}
    \sum_{a \in A} r_a = r_a \cdot |D| =  3 (c_0 - g_0 - 2 g_1 - 4
    c'_2).
\end{equation}
Combining the Equations \eqref{Equ:a},\eqref{Equ:b},\eqref{Equ:c},
and \eqref{Equ:d} we obtain $ 6 g_0 + 10 g_1 + 16 c'_2 + 3c_0 - 3
g_0 - 6 g_1 - 12 c'_2 = 2 c_1$, that is
\begin{equation}\label{Equ:CSI}
    3 c_0 + 4 c_2' + 4 g_1 + 3 g_0 = 2 c_1.
\end{equation}

Using Proposition \ref{Pro:number of G1 gaps}, we get $3 c_0 + 4c_2' + 8c_2^* - 4 c_1 + 3g_0 = 2c_1$, that is, since $c_2 =c_2' +c_2^*$,
$3 c_0 + 4c_2 + 4c_2^* + 3 g_0 = 6c_1$.
%By  $c_2 = 6 c_3p -c_2'$, it is $-c_2' =c_2 - 6c_3$.
Moreover, by Proposition \ref{Pro:c2}, we get $-c_2' =c_2 - 6c_3$.
So we can write
\begin{align*}
   c_2^*  & =c_2 -c_2'\\
           & =c_2 +c_2 - 6c_3 \\
           &  = 2c_2 - 6c_3.
\end{align*}
Substituting the last expression in Equation \eqref{Equ:CSI}, we have
\[ 3 c_0 + 4c_2 + 8c_2 -24c_2 + 3 g_0 = 6c_1, \]
that is
\[ 3 c_0 + 12c_2 - 24 c_3 + 3 g_0 = 6c_1, \]
from which we finally  get
\[ g_0 = \sum_{i = 0} ^3 (-1)^{i + 1} 2^i  c_i. \]
\end{proof}

%----------

{\noindent {\it Key words and phrases:}}
digital geometry, digital curve, $0$-gap, $i$-tandem, $i$-hub, adjacency relation,
grid cell model, free cell.

\vskip 2pt

{\noindent {\it AMS Subject Classification:}} Primary: 52C35;
Secondary: 52C99. \vskip 15pt

\parskip=0pt {\noindent {\sc Giorgio NORDO \newline
Dipartimento di Matematica e Informatica, Universit\`a degli Studi di Messina, \newline
Viale F. Stagno D'Alcontres, 31 --
Contrada Papardo, salita Sperone, 31 - 98166 Sant'Agata -- Messina
(ITALY)}} \vskip 1pt \noindent
E-mail:  {\tt giorgio.nordo@unime.it}
\vskip 4mm
{\noindent {\sc Angelo MAIMONE \newline
E-mail:  {\tt angelomaimone@libero.it}

\end{document}